\documentclass[10pt,twocolumn]{article}
\usepackage[letterpaper,margin=1in]{geometry}
\usepackage[T1]{fontenc}
\usepackage{lmodern}
\usepackage[hyphens]{url}
\usepackage{graphicx}
\usepackage{natbib}
\usepackage{caption}
\usepackage{booktabs}
\usepackage{amsmath,amssymb,amsthm,mathtools}
\usepackage{array}
\newcolumntype{L}[1]{>{\raggedright\arraybackslash}p{#1}}
\usepackage[hidelinks]{hyperref}
\newtheorem{definition}{Definition}
\newtheorem{theorem}{Theorem}

\newtheorem{corollary}{Corollary}
\newtheorem{proposition}{Proposition}

\newcommand{\Dist}{\mathcal{D}}
\newcommand{\supp}{\operatorname{supp}}
\newcommand{\TV}{\operatorname{TV}}
\newcommand{\GenOS}{\textnormal{\textsc{GenOS}}}
\newcommand{\Front}{\mathsf{Front}}
\newcommand{\Gen}{\mathsf{Gen}}
\newcommand{\Obs}{\mathsf{Obs}}
\newcommand{\VerifyOp}{\mathsf{Check}}
\newcommand{\Commit}{\mathsf{commit}}

\newcommand{\push}[1]{(#1)_{\#}}
\newcommand{\class}[1]{[#1]}
\newcommand{\liftrel}[1]{\overline{#1}}
\newcommand{\defect}{\operatorname{def}}
\newcommand{\bind}{\mathbin{;}}

\title{GenOS: Compositional Certificates for Semantic Robustness in AI Code Generation}
\author{Corrado Priami\\
Dipartimento di Informatica, Universit\`a di Pisa and Fondazione
Start Attractor\\
\texttt{corrado.priami@unipi.it}}
\date{}

\hypersetup{
  pdftitle={GenOS: Compositional Certificates for Semantic Robustness in AI Code Generation},
  pdfauthor={Corrado Priami},
  pdfsubject={Probabilistic operational semantics and compositional certificates for AI code-generation workflows},
  pdfkeywords={AI code generation, operational semantics, probabilistic programs, semantic robustness, certificates}
}

\begin{document}
\maketitle

\begin{abstract}
AI coding agents are stochastic programs: a prompt is interpreted, candidate artifacts are sampled, validators return observations, and an orchestrator commits or repairs. Small changes to a prompt or specification can therefore change a distribution of program behaviors even when the texts appear synonymous. Existing code-generation systems and benchmarks measure correctness, but do not provide a compositional criterion for when one prompt, contract, generator, or program may safely replace another inside a complete agentic workflow.

We introduce \GenOS{}, a probabilistic operational semantics for this replacement problem. Each layer is a Markov kernel and each semantic interface carries an observer-relative equivalence. Our main result proves that an equivalence-compatible kernel descends to quotient classes and that quotienting commutes with distributional extension and sequential composition. Consequently, equivalent prompts induce equal probabilities for every downstream equivalence-closed event, including verified commit. We further prove an operational workflow-bisimulation theorem, guarded-commit safety under sound validation, total-variation non-expansiveness, and an additive robustness bound that isolates approximation defects at individual pipeline layers.

An executable insertion-sort audit instantiates the theory with natural-language paraphrases, a formal contract, six concrete programs, two program observers, and exhaustive execution on 121 inputs. Equivalent prompts have identical code-class and commit distributions; a prompt with 5\% mass on an in-place contract is separated by a mutation observer while its downstream distances remain below the predicted bound. Twenty thousand randomized finite-kernel trials produce no violation of the exact or approximate laws. The framework is model-parametric: compatibility is a measurable property to test, not an assumption that language models automatically satisfy.
\end{abstract}

\section{Introduction}

A contemporary AI programming system is not a function from text to code. It is a workflow: interpret an instruction, retrieve context, sample one or more artifacts, execute tests or proof tools, repair failures, and decide whether to commit. Systems such as Jigsaw combine neural generation with program analysis and synthesis \cite{jain2022jigsaw}; Clover, AutoSpec, and VeCoGen place formal checking inside iterative generation loops \cite{sun2024clover,wen2024autospec,sevenhuijsen2024vecogen}; and prompt languages such as LMQL make model interaction and constraints programmable \cite{beurer2023lmql}. These systems expose a semantic problem that pass rates alone do not answer:

\begin{quote}
When can one component of a stochastic code-generation workflow be replaced by another without changing the behavior observable at the end of the workflow?
\end{quote}

Textual equality is too strong and informal synonymy is too weak. ``Return a sorted copy'' and ``produce a nondecreasing permutation without modifying the input'' may denote the same contract. Two generators may assign different probabilities to syntactically distinct programs but the same mass to behavioral classes. A test suite may identify programs that a mutation-sensitive observer separates. A verifier can preserve safety even when the generator is unreliable, but only if the commit guard is connected formally to target-language truth.

We propose \GenOS{} (Generative Operational Semantics), a layered semantics in which prompts, contracts, code artifacts, validator outcomes, and workflow observations are linked by probability kernels. At each interface, an equivalence states what the next layer is allowed to forget. The central proof obligation is \emph{compatibility}: equivalent inputs must induce equal mass on equivalent outputs. Compatibility yields a quotient kernel, and quotienting commutes with composition. This turns a local condition at each interface into an end-to-end semantic certificate.

\paragraph{Contributions.}
(1) We define an observer-relative kernel semantics for prompt-driven code generation, including a probabilistic labelled transition system (PLTS) for generate--check--repair--commit workflows.
(2) We prove the \emph{commuting quotient theorem}: compatible kernels induce well-defined quotient kernels, and quotienting commutes with Kleisli composition.
(3) We derive prompt-replacement and workflow-bisimulation corollaries, plus guarded-commit safety.
(4) We introduce a quotient total-variation metric and prove non-expansiveness and an additive bound for approximately compatible layers.
(5) We provide an executable finite audit with concrete Python programs, exhaustive observers, exact distributions, and randomized law checking.

The aim is not to claim that a particular model is invariant to paraphrase. FormalBench, for example, reports substantial instability under semantics-preserving transformations \cite{lecong2025formalbench}. \GenOS{} makes such instability a quantified interface defect and shows how it propagates.

\section{Related Work}

\paragraph{AI code generation and validation.}
HumanEval and MBPP established functional evaluation for code synthesis \cite{chen2021humaneval,austin2021mbpp}; EvalPlus demonstrates that weak test suites induce overly coarse notions of correctness \cite{liu2023evalplus}. In \GenOS{}, a test suite, prover, or analyzer is an observer whose discriminating power explicitly determines program equivalence. Jigsaw uses semantic post-processing \cite{jain2022jigsaw}, while Clover, AutoSpec, VeCoGen, and recent formal-verification pipelines use generated annotations, proof obligations, and repair feedback \cite{sun2024clover,wen2024autospec,sevenhuijsen2024vecogen,astrogator2025}. These systems motivate guarded workflows, but they do not state conditions under which equivalence at one pipeline layer is preserved by all subsequent layers.

\paragraph{Prompts and formal specifications.}
LMQL gives prompts control flow and output constraints \cite{beurer2023lmql}. nl2spec maps natural language to temporal logic while exposing ambiguity for interactive correction \cite{cosler2023nl2spec}. FormalBench evaluates specification consistency, completeness, and robustness \cite{lecong2025formalbench}. \GenOS{} is complementary: it treats prompt interpretation and formalization as kernels, then relates their quotient behavior to generated programs and commit decisions.

\paragraph{Operational and probabilistic semantics.}
Our transition-based view follows structural operational semantics \cite{plotkin2004sos}; the lifting of equivalence to distributions follows probabilistic bisimulation and probabilistic automata \cite{larsen1991bisimulation,segala1995probabilistic,baier2008principles}. Pi-calculus is structurally appropriate for communicating agents \cite{milner1999communicating}, but classical stochastic pi-calculus places rates on actions \cite{priami1995stochastic}. Model calls instead expose uncertainty over \emph{which artifact} is produced, motivating transitions to distributions over artifacts and successor configurations.

\section{Layered Generative Operational Semantics}

We use finite distributions; countable and measurable versions follow with standard kernel assumptions. For a set $X$, $\Dist(X)$ is the set of finitely supported probability distributions on $X$, and $\delta_x$ is the point distribution at $x$. A kernel $K:X\to\Dist(Y)$ has affine extension
\[
K^{\dagger}(\mu)(y)=\sum_x \mu(x)K(x)(y).
\]
For $K:X\to\Dist(Y)$ and $L:Y\to\Dist(Z)$, sequential composition is $(K\bind L)(x)=L^{\dagger}(K(x))$.

A minimal pipeline has prompt sources $P$, contracts $S$, generated artifacts $C$, and observations $O$:
\[
P \xrightarrow{\Front} \Dist(S)
\xrightarrow{\Gen} \Dist(C)
\xrightarrow{\Obs} \Dist(O).
\]
When validation depends on the contract, generation preserves the contract tag: $\widehat{\Gen}(s)$ is a distribution on $S\times C$ with first component $s$, and $\Obs:S\times C\to\Dist(O)$. Observations may record tests, proof results, diagnostics, repair requests, resource use, mutation, exceptions, or commit.

Table~\ref{tab:interfaces} summarizes the proof obligation at each interface. The equivalences are not required to be identical across applications: a security audit can refine the code observer with information-flow events, while a cost audit can add time or token use to workflow observations.

\begin{table}[t]
\centering
\footnotesize
\setlength{\tabcolsep}{4pt}
\begin{tabular}{L{1.45cm}L{1.65cm}L{2.45cm}}
\toprule
Layer & Typical equivalence & Local audit obligation \\
\midrule
Intent & Same contract mass & Paraphrases formalize alike \\
Generate & Same behavior mass & Syntax preserves behavior \\
Validation & Same verdict class & Tools distinguish\newline selected classes \\
Control & Same abstract trace & Replacement is safe in context \\
\bottomrule
\end{tabular}
\caption{Semantic interfaces and their local compatibility obligations.}
\label{tab:interfaces}
\end{table}

The same kernels generate a PLTS. A configuration contains a control location, semantic value, context, and provenance trace. A model call steps to a distribution; deterministic parsing, communication, and commit rules step to point distributions. For example,
\begin{multline*}
\langle \mathsf{generate}(s),\Gamma,\kappa\rangle
 \xrightarrow{\mathsf{sample}} {}\\
 \sum_c \Gen(s)(c)\,\delta_{\langle \mathsf{check}(s,c),\Gamma,\kappa c\rangle}.
\end{multline*}
Thus the kernel and operational views describe the same transition structure at different granularities.

\paragraph{Observer-relative equivalence.}
Let $R_X$ be an equivalence on $X$ and $q_X:X\to X/R_X$ its quotient map. The lifting to distributions compares probability mass on equivalence classes.

\begin{definition}[Lifted equivalence]
For $\mu,\nu\in\Dist(X)$, write $\mu\,\liftrel{R_X}\,\nu$ iff, for every class $E\in X/R_X$, $\mu(E)=\nu(E)$. Equivalently, $\push{q_X}\mu=\push{q_X}\nu$.
\end{definition}

Program equivalence is deliberately parameterized by an observer. A return-value observer may equate a pure sort with an in-place sort; an observer that also records the post-state separates them. Contract equivalence can be defined extensionally by satisfaction over program classes, and prompt equivalence by the contract-class mass induced by $\Front$.

\begin{definition}[Compatible kernel]
A kernel $K:X\to\Dist(Y)$ is $(R_X,R_Y)$-compatible when $xR_Xx'$ implies $K(x)\,\liftrel{R_Y}\,K(x')$.
\end{definition}

Compatibility is the semantic substitutability test for a layer. It is inspectable empirically by repeated sampling and classifying outcomes, or proved for deterministic front ends and symbolic transformations.

\paragraph{Contracts and extensional validators.}
Let $\models\;\subseteq S\times C$ be target-language satisfaction. A useful contract relation is
\[
sR_Ss'\quad\Longleftrightarrow\quad
\forall c,c'.\ cR_Cc'\Rightarrow
(s\models c\Leftrightarrow s'\models c').
\]
This definition says that equivalent contracts select the same accepted code classes, not necessarily the same syntax or proof obligations. A deterministic validator is \emph{extensional} when its verdict and diagnostic class depend only on $\class{s}$ and $\class{c}$.

\begin{proposition}[Validator compatibility]\label{prop:validator}
If contract satisfaction is closed under $(R_S,R_C)$ and a validator is extensional, then the tagged validation kernel on $S\times C$ is compatible with the product equivalence. If its accept verdict is also sound, the resulting commit observation is both equivalence-preserving and safe.
\end{proposition}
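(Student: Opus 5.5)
The plan is to unfold the definitions and reduce both claims to statements about point distributions. First I would fix the product equivalence on $S\times C$: $(s,c)\,R_{S\times C}\,(s',c')$ iff $sR_Ss'$ and $cR_Cc'$. The validator is deterministic, so its kernel is $\Obs(s,c)=\delta_{v(s,c)}$, where $v(s,c)\in O$ records the verdict and diagnostic. I equip $O$ with the equivalence $R_O$ that identifies observations having the same verdict and diagnostic class. Extensionality says exactly that $v(s,c)$ depends only on $(\class{s},\class{c})$, up to $R_O$. Hence $v$ factors as $\bar v\circ(q_S\times q_C)$ followed by $q_O$.

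For compatibility, take $(s,c)R_{S\times C}(s',c')$. Then $\class{s}=\class{s'}$ and $\class{c}=\class{c'}$, so $v(s,c)\,R_O\,v(s',c')$. For point distributions, $\delta_o\,\liftrel{R_O}\,\delta_{o'}$ holds iff $oR_Oo'$: every class $E$ receives mass $1$ or $0$ from each, and the two agree precisely when $o$ and $o'$ lie in the same class. This gives the lifted equivalence required by the definition of a compatible kernel.

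I also have to explain where the hypothesis that satisfaction is closed under $(R_S,R_C)$ enters. I read that hypothesis as: $sR_Ss'$ and $cR_Cc'$ imply $s\models c\Leftrightarrow s'\models c'$, which is built into the definition of $R_S$ above. Its role is to make the accept event meaningful at the level of classes. The set $A=\{(s,c): s\models c\}$ is then a union of product classes. If the verdict class "accept" is to be interpreted as "satisfies", extensionality and closure together ensure that this interpretation is consistent across equivalent inputs.

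For the second claim I would derive the commit observation as a deterministic post-processing $\Commit$ of the verdict: commit iff the verdict is accept. Equivalence preservation follows because composing with a deterministic map that is constant on $R_O$-classes preserves compatibility. Concretely, $\Commit(v(s,c))$ depends only on $(\class{s},\class{c})$, so the same point-distribution argument applies. This could also be obtained from the composition part of the main theorem, but the direct argument is shorter.

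Safety means $\Commit(v(s,c))=\mathsf{commit}\Rightarrow s\models c$. This is exactly soundness of accept, since commit occurs only on accept. The same holds for every equivalent pair $(s',c')$, by closure of $\models$. For a distribution $\mu$ on $S\times C$, the probability of committing an unsatisfying pair is $\sum_{(s,c)\notin A}\mu(s,c)\,[\Commit(v(s,c))=\mathsf{commit}]=0$.

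I expect the main obstacle to be definitional rather than technical. The statement leaves implicit what $R_O$ is and what "equivalence-preserving" means for the commit observation. I would fix $R_O$ as equality of verdict and diagnostic class, and I would state explicitly that the commit event is $R_O$-closed. Once these are fixed, everything reduces to the factorization of $v$ through the quotient.
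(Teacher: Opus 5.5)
Your proposal is correct and matches the argument the paper relies on. The paper gives no separate proof: it unfolds extensionality into $R_O$-equivalent point masses, exactly as you do, and argues the safety clause as in the proof of Theorem~\ref{thm:safe} (commit requires accept, and soundness turns accept into $s\models c$). Your observation that compatibility already follows from extensionality alone is accurate and worth stating explicitly; closure of $\models$ only makes the correspondence between accept and satisfaction invariant across equivalent pairs.
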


The proposition distinguishes two properties that are often conflated. Extensionality states that equivalent inputs receive equivalent observations; soundness states that acceptance implies semantic truth. Either can hold without the other.

\section{Compositional Certificates}

\begin{theorem}[Quotient kernel and commuting square]\label{thm:quotient}
If $K:X\to\Dist(Y)$ is $(R_X,R_Y)$-compatible, then
\[
\overline K(\class{x})=\push{q_Y}(K(x))
\]
defines a unique kernel $\overline K:X/R_X\to\Dist(Y/R_Y)$. Moreover, for every $\mu\in\Dist(X)$,
\[
\push{q_Y}\bigl(K^{\dagger}(\mu)\bigr)
=\overline K^{\dagger}\bigl(\push{q_X}\mu\bigr).
\]
\end{theorem}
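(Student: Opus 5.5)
The proof has two parts: well-definedness (with uniqueness) of $\overline K$, and then the commuting identity, which follows by linearity.

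\emph{Step 1: well-definedness and uniqueness.} Every class $\class{x}\in X/R_X$ has a representative, so the formula $\overline K(\class{x})=\push{q_Y}(K(x))$ depends only on the chosen representative $x$. If $x R_X x'$, then compatibility gives $K(x)\,\liftrel{R_Y}\,K(x')$. By the equivalent form of the lifted-equivalence definition, this says exactly that $\push{q_Y}K(x)=\push{q_Y}K(x')$, so the value does not depend on the representative. It is finitely supported because $K(x)$ is and pushforward preserves finite support. Uniqueness is immediate: any kernel $\overline K$ satisfying $\overline K\circ q_X=\push{q_Y}\circ K$ is determined on every class, since $q_X$ is surjective.

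\emph{Step 2: the commuting square.} I will show that both sides assign the same mass to each class $E\in Y/R_Y$. On the left, pushforward along $q_Y$ and the affine extension are both linear, and all sums are finite. Exchanging them gives
\[
\push{q_Y}\bigl(K^{\dagger}(\mu)\bigr)(E)=\sum_{x}\mu(x)\,K(x)(E)=\sum_{x}\mu(x)\,\push{q_Y}(K(x))(E).
\]
Next I group the sum over $x$ by classes $D\in X/R_X$. For $x\in D$, Step 1 gives $\push{q_Y}(K(x))=\overline K(D)$, so the sum becomes
\[
\sum_{D}\Bigl(\sum_{x\in D}\mu(x)\Bigr)\overline K(D)(E)=\sum_D \push{q_X}\mu(D)\,\overline K(D)(E),
\]
which is by definition $\overline K^{\dagger}(\push{q_X}\mu)(E)$.

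There is no real obstacle here. The only point needing care is the regrouping in Step 2, which is exactly where compatibility is used, namely constancy of $\push{q_Y}K$ on classes. Finiteness of supports makes all rearrangements legitimate; in the countable case, absolute convergence (all terms are nonnegative) would justify them in the same way.
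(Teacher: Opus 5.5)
Your proof is correct and takes the same route as the paper. Compatibility gives independence of the representative, and the commuting square is checked one class $E$ at a time by reducing both sides to $\sum_x\mu(x)K(x)(E)$ and regrouping the sum over $R_X$-classes; your uniqueness argument via surjectivity of $q_X$ is a slightly more explicit form of the paper's remark that values on quotient classes determine the kernel.
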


\begin{proof}
Compatibility makes the definition independent of the representative $x$. For a class $B\in Y/R_Y$, both sides equal
$\sum_x\mu(x)K(x)(q_Y^{-1}(B))$ after grouping the outer sum by $R_X$-classes. Uniqueness follows because point distributions on quotient classes determine a kernel.
\end{proof}

\begin{theorem}[Quotients commute with composition]\label{thm:composition}
Let $K:X\to\Dist(Y)$ and $L:Y\to\Dist(Z)$ be compatible with $(R_X,R_Y)$ and $(R_Y,R_Z)$, respectively. Then $K\bind L$ is $(R_X,R_Z)$-compatible and
\[
\overline{K\bind L}=\overline K\bind\overline L.
\]
\end{theorem}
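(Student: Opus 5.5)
The plan is to derive both claims from Theorem~\ref{thm:quotient}, applied once to $L$ and once to $K$, plus the definition $(K\bind L)(x)=L^{\dagger}(K(x))$.

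\emph{Compatibility of $K\bind L$.} Suppose $xR_Xx'$. Then compatibility of $K$ gives $K(x)\,\liftrel{R_Y}\,K(x')$, that is, $\push{q_Y}K(x)=\push{q_Y}K(x')$. Applying the commuting square of Theorem~\ref{thm:quotient} to $L$ with $\mu=K(x)$ yields
\[
\push{q_Z}\bigl((K\bind L)(x)\bigr)=\push{q_Z}\bigl(L^{\dagger}(K(x))\bigr)=\overline L^{\dagger}\bigl(\push{q_Y}K(x)\bigr).
\]
The same identity holds for $x'$. The right-hand sides agree because $\push{q_Y}K(x)=\push{q_Y}K(x')$. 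Hence $(K\bind L)(x)\,\liftrel{R_Z}\,(K\bind L)(x')$, which is exactly $(R_X,R_Z)$-compatibility.

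\emph{Equality of quotient kernels.} Compatibility lets Theorem~\ref{thm:quotient} define $\overline{K\bind L}(\class{x})=\push{q_Z}((K\bind L)(x))$. By the display above this equals $\overline L^{\dagger}(\push{q_Y}K(x))$. By the definition of $\overline K$ we have $\push{q_Y}K(x)=\overline K(\class{x})$, so the expression becomes $\overline L^{\dagger}(\overline K(\class{x}))=(\overline K\bind\overline L)(\class{x})$. This holds for every class $\class{x}$, so the two kernels on $X/R_X$ coincide.

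The only step needing care is the use of the commuting square for $L$ at an arbitrary distribution $\mu=K(x)$, rather than at a point mass. Theorem~\ref{thm:quotient} states it for every $\mu\in\Dist(X)$, so this is available. If one wanted to re-derive it, the route is to expand $L^{\dagger}(\mu)(q_Z^{-1}(C))=\sum_y\mu(y)L(y)(q_Z^{-1}(C))$ and group the sum by $R_Y$-classes. Within a class, $L(y)(q_Z^{-1}(C))$ is constant by compatibility of $L$, so the sum collapses to $\sum_B\mu(B)\,\overline L(B)(C)$. Finite support makes this regrouping legitimate.
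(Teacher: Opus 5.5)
Your proof is correct and follows the paper's route: apply the commuting square of Theorem~\ref{thm:quotient} for $L$ at $\mu=K(x)$, then identify $\push{q_Y}K(x)$ with $\overline K(\class{x})$ to obtain $\overline L^{\dagger}(\overline K(\class{x}))$. You also spell out the $(R_X,R_Z)$-compatibility of $K\bind L$, which the paper's two-line proof leaves implicit; it follows from compatibility of $K$ exactly as you argue.
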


\begin{proof}
Apply Theorem~\ref{thm:quotient} first to $L^{\dagger}(K(x))$, then to $K(x)$. The resulting pushforward is $\overline L^{\dagger}(\overline K(\class{x}))$.
\end{proof}

\begin{corollary}[End-to-end prompt replacement]\label{cor:replacement}
If all layers of a \GenOS{} pipeline are compatible and prompts $p,p'$ induce equivalent contract distributions, then the final observation distributions agree on every $R_O$-class. Hence every equivalence-closed event $A\subseteq O$, including a commit event, has equal probability under $p$ and $p'$.
\end{corollary}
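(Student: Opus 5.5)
The plan is to read the pipeline as the composite kernel $\Front\bind\widehat{\Gen}\bind\Obs:P\to\Dist(O)$, reduce its tail to a single compatible kernel, and then apply Theorem~\ref{thm:quotient} to the two contract distributions induced by $p$ and $p'$. Concretely, set $M=\widehat{\Gen}\bind\Obs:S\to\Dist(O)$. When the validator does not read the contract tag, $M$ is simply $\Gen\bind\Obs$.

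First I make compatibility precise at each interface. $\widehat{\Gen}$ is $(R_S,R_S\times R_C)$-compatible, where $R_S\times R_C$ is the product equivalence on $S\times C$. This holds because $\Gen$ is $(R_S,R_C)$-compatible and the first component of $\widehat{\Gen}(s)$ is fixed to $s$. The point to check is that every product class $[s]\times B$ receives mass $\Gen(s)(B)$ if $s\in[s]$ and $0$ otherwise, so the mass on product classes depends only on $[s]$. Next, $\Obs$ is $(R_S\times R_C,R_O)$-compatible: this is either a hypothesis of the corollary or it is supplied by Proposition~\ref{prop:validator} for extensional validators. Theorem~\ref{thm:composition} then gives that $M$ is $(R_S,R_O)$-compatible, with quotient $\overline M=\overline{\widehat{\Gen}}\bind\overline{\Obs}$.

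Now let $\mu=\Front(p)$ and $\mu'=\Front(p')$. The hypothesis says $\mu\,\liftrel{R_S}\,\mu'$, that is, $\push{q_S}\mu=\push{q_S}\mu'$. The final observation distributions are $M^{\dagger}(\mu)$ and $M^{\dagger}(\mu')$, since $(\Front\bind M)(p)=M^{\dagger}(\Front(p))$. Theorem~\ref{thm:quotient} then gives
\[
\push{q_O}M^{\dagger}(\mu)=\overline M^{\dagger}(\push{q_S}\mu)=\overline M^{\dagger}(\push{q_S}\mu')=\push{q_O}M^{\dagger}(\mu'),
\]
which says exactly that the two final distributions agree on every $R_O$-class.

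For the event statement, let $A\subseteq O$ be equivalence-closed, so $A$ is a union of $R_O$-classes. Its probability is the sum of the class masses, and these agree, so the probabilities are equal. For the commit event, it suffices that commit is recorded in the observation in a way respected by $R_O$; I would state this as the requirement that $R_O$ refines the commit/non-commit partition.

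The main obstacle is the bookkeeping for the tagged generator. Its compatibility with the product equivalence has to be derived rather than assumed, and the composition must be done at the type $S\times C$ and not at $C$. Everything else is a direct chaining of the two theorems. If "all layers compatible" is read to include $\Front$ being $(R_P,R_S)$-compatible, the argument is unchanged, because only the equivalence of $\Front(p)$ and $\Front(p')$ is used.
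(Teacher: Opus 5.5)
Your proposal is correct and follows the route the paper intends. The paper gives no separate proof because the corollary is Theorem~\ref{thm:composition} (making the tail $\widehat{\Gen}\bind\Obs$ compatible) followed by the commuting square of Theorem~\ref{thm:quotient} applied to $\Front(p)$ and $\Front(p')$, with equivalence-closed events as unions of $R_O$-classes. Your explicit check that $\widehat{\Gen}$ is $(R_S,R_S\times R_C)$-compatible and your caveat that $R_O$ must refine the commit/non-commit partition make explicit what the paper leaves tacit. In that check, write the product class as $E\times B$, carrying mass $\Gen(s)(B)$ if $s\in E$ and $0$ otherwise, rather than $[s]\times B$.
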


This result separates syntax-level variation from observable semantic variation. Artifact probabilities may differ substantially; only class mass must be preserved.

\paragraph{Operational consequence.}
Relate two workflow configurations when they are at the same control location, their current values are equivalent, and their traces agree after quotienting recorded artifacts and observations. Each deterministic rule is matched by a point transition, and each stochastic rule is matched by compatibility.

\begin{theorem}[Workflow bisimulation]\label{thm:bisim}
If every stochastic action kernel in two structurally identical \GenOS{} workflows is compatible and deterministic rules respect the selected equivalences, the induced configuration relation is a probabilistic bisimulation. Consequently, the workflows assign equal probability to every equivalence-closed finite trace cylinder.
\end{theorem}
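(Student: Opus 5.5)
We prove Theorem~\ref{thm:bisim} by exhibiting the configuration relation $\mathcal{R}$ described above as an equivalence and checking the standard probabilistic bisimulation transfer condition for it. Concretely, $\langle \ell,v,\Gamma,\kappa\rangle\,\mathcal{R}\,\langle \ell',v',\Gamma',\kappa'\rangle$ iff $\ell=\ell'$, $v R_{\ell} v'$ for the equivalence attached to the value type at $\ell$, the contexts are related componentwise, and $\kappa,\kappa'$ have equal length with pointwise-related entries (artifacts by $R_C$, observations by $R_O$). Reflexivity, symmetry and transitivity are inherited from the component equivalences. We then show: whenever $\gamma\,\mathcal{R}\,\gamma'$ and $\gamma\xrightarrow{a}\mu$, there is a transition $\gamma'\xrightarrow{a}\mu'$ with $\mu\,\liftrel{\mathcal{R}}\,\mu'$, meaning equal mass on every $\mathcal{R}$-class.

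\textbf{Deterministic rules.} Structural identity of the two workflows means both configurations sit at the same location $\ell$, so they enable the same rule. By hypothesis the rule respects the selected equivalences. Hence the two successors are point masses on $\mathcal{R}$-related configurations, and $\delta_{\gamma_1}\,\liftrel{\mathcal{R}}\,\delta_{\gamma_1'}$ holds trivially.

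\textbf{Stochastic rules.} Take the $\mathsf{sample}$ rule displayed earlier. The successor distribution is the pushforward of $\Gen(s)$ under the injection $f_{\Gamma,\kappa}:c\mapsto\langle\mathsf{check}(s,c),\Gamma,\kappa c\rangle$. Compatibility gives $\Gen(s)\,\liftrel{R_C}\,\Gen(s')$. The key observation is that $f_{\Gamma,\kappa}$ and $f_{\Gamma',\kappa'}$ map $R_C$-related artifacts to $\mathcal{R}$-related configurations. Consequently, for each $\mathcal{R}$-class $E$, the preimage $f_{\Gamma,\kappa}^{-1}(E)$ is a union of $R_C$-classes, and it is the same union as $f_{\Gamma',\kappa'}^{-1}(E)$. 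Summing the class masses gives equality. This is the same grouping argument as in the proof of Theorem~\ref{thm:quotient}, applied to the pushforward along the successor map. Validation and repair kernels are handled the same way using Proposition~\ref{prop:validator}.

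\textbf{Main obstacle.} The hardest part is precisely this well-definedness: the successor map must send each class to a single class, not just individual related pairs. It holds only because traces are compared after quotienting. If $\kappa$ were compared syntactically, $\kappa c$ and $\kappa' c'$ would be separated, and the argument would fail. I would therefore state the trace relation explicitly as a pointwise quotient before running the case analysis.

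\textbf{Trace cylinders.} Finally, we derive equality on equivalence-closed finite trace cylinders by induction on length $n$. The probability of a length-$(n+1)$ cylinder is the sum, over $\mathcal{R}$-classes of length-$n$ prefixes, of the prefix mass times the one-step class mass. By the transfer property, each one-step class mass depends only on the class of the current configuration. By the induction hypothesis, the prefix masses agree. This is exactly iterating Theorem~\ref{thm:composition} on the one-step quotient kernel, which yields the claim for every equivalence-closed cylinder.
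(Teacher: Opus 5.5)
Your proposal is correct and follows the paper's argument: you use the same configuration relation (same control location, equivalent current values, traces related after quotienting recorded artifacts and observations), and you check by cases on the operational rules that deterministic rules are matched by point transitions and stochastic rules by compatibility. You also make explicit two steps the paper leaves implicit: the preimage/pushforward argument showing that the successor map sends each $R_C$-class into a single $\mathcal{R}$-class, and the induction on cylinder length that turns the one-step transfer property into equal probabilities for equivalence-closed trace cylinders.
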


The proof is by cases on the operational rules. The theorem supports local replacement inside loops and agent networks, not only a fixed acyclic pipeline.

\paragraph{Guarded commitment.}
Let $s\models c$ mean that target artifact $c$ satisfies contract $s$. A checker is sound when $\VerifyOp(s,c)=\mathsf{accept}$ implies $s\models c$. The commit rule is guarded when only accepted pairs can reach $\Commit(c)$.

\begin{theorem}[Safe commit]\label{thm:safe}
In a guarded workflow with a sound checker, every reachable committed artifact satisfies its associated contract, independent of the generator distribution and repair policy.
\end{theorem}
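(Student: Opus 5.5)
The plan is to prove Theorem~\ref{thm:safe} by an invariant argument over reachable configurations of the PLTS, not by reasoning about probabilities. We define a predicate $\mathcal{I}$ on configurations: every occurrence of $\Commit(c)$ recorded in the provenance trace $\kappa$, or appearing as the current control location, is paired with the contract $s$ it was generated and checked against, and $s\models c$. A configuration is reachable when there is a finite path from an initial configuration in which every transition has positive probability. The statement then says that $\mathcal{I}$ holds at every reachable configuration. Because reachability only uses the supports of the transition distributions, the actual weights $\Gen(s)(c)$ and the repair policy never enter the argument. This is where the independence from the generator distribution and the repair policy comes from.

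The first step is the base case. An initial configuration has an empty trace and is not at a commit location, so $\mathcal{I}$ holds vacuously.

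The second step is preservation, by cases on the operational rule fired at the current configuration.
\begin{itemize}
\item For $\mathsf{generate}$, $\mathsf{sample}$, $\mathsf{repair}$ and the other stochastic model-call rules, every successor in the support of the step only appends an artifact or an observation to $\kappa$. No new commit is introduced, so $\mathcal{I}$ is inherited. This holds for any distribution, including adversarial ones.
\item For deterministic parsing and communication steps, the same observation applies.
\item The only rule that can create a commit is the commit rule itself. Guardedness says that its premise requires $\VerifyOp(s,c)=\mathsf{accept}$ for the tagged pair $(s,c)$. Soundness of the checker then gives $s\models c$, so the new commit satisfies its contract and $\mathcal{I}$ is re-established.
\end{itemize}
Induction on path length then shows that $\mathcal{I}$ holds at all reachable configurations. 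In particular, every committed artifact satisfies its associated contract. The probabilistic form, that commits of non-satisfying artifacts have probability $0$, follows because such commits lie outside every positive-probability path.

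The main obstacle is making ``associated contract'' precise. The checked pair $(s,c)$ must be the same pair that is committed. A repair loop could replace $c$ by $c'$, or change the contract, between the check and the commit. I would handle this with the contract-tagged generation $\widehat{\Gen}$ and by reading guardedness as follows: the commit rule fires on a configuration $\langle \mathsf{commit}(s,c),\Gamma,\kappa\rangle$ that is reachable only from $\langle\mathsf{check}(s,c),\Gamma,\kappa\rangle$ via an accept transition on that same pair, and no intervening rule rewrites $(s,c)$. If the formal definition of guardedness is looser than this, I would strengthen the invariant to also record the last accepted pair in the configuration, and check that every non-commit rule either preserves that pair or clears the commit-enabled flag.
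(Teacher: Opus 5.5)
Your proposal is correct and follows the paper's argument. The paper simply observes that the final transition into a committed state requires a preceding accept observation, that soundness converts this into $s\models c$, and that earlier stochastic choices are irrelevant. Your invariant induction over positive-probability paths is a careful formalization of exactly that reasoning, and your treatment of the checked-versus-committed pair makes explicit what the paper's informal definition of guardedness leaves implicit.
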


\begin{proof}
The final transition into a committed state requires a preceding accept observation. Soundness converts this observation into $s\models c$; earlier stochastic choices are irrelevant.
\end{proof}

The theorem does not claim completeness: correct programs may be rejected. It also exposes a key boundary. Tests or learned judges usually justify only a weaker, probabilistic soundness claim, whereas proof-producing or sound static tools can justify the premise directly.

\section{Repair, Iteration, and Provenance}

A core workflow grammar is sufficient to expose iterative behavior:
\begin{align*}
W ::= {} & \mathsf{generate}(s)\mid\mathsf{check}(s,c)
\mid\mathsf{repair}(s,c,d)\\
&\mid\mathsf{commit}(s,c)\mid W;W\mid\mathsf{repeat}(W).
\end{align*}
The configuration trace records events such as generated artifacts, checker results, diagnostics, and commits. Rules only append events.

\begin{proposition}[Trace monotonicity]\label{prop:trace}
If $\langle W,\Gamma,\kappa\rangle\rightarrow\mu$, every configuration in $\supp(\mu)$ has a trace with prefix $\kappa$. Therefore provenance is monotone along every finite execution.
\end{proposition}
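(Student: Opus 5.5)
The plan is a case analysis on the operational rules that generate a one-step transition $\langle W,\Gamma,\kappa\rangle\rightarrow\mu$, followed by induction on execution length for the monotonicity claim.

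\textbf{One-step claim.} First I fix the invariant that every rule's conclusion has the form
\[
\langle W,\Gamma,\kappa\rangle\rightarrow \sum_i p_i\,\delta_{\langle W_i,\Gamma_i,\kappa e_i\rangle},
\]
where each $e_i$ is a (possibly empty) finite sequence of events. The only operations rules perform on traces are appending and copying. I then check this invariant rule by rule.
\begin{itemize}
\item For $\mathsf{generate}(s)$ the rule displayed earlier yields $\sum_c \Gen(s)(c)\,\delta_{\langle \mathsf{check}(s,c),\Gamma,\kappa c\rangle}$, so each successor trace is $\kappa c$.
\item For $\mathsf{check}$ and $\mathsf{repair}$ the successor distributions are obtained by pushing the observation or repair kernel forward. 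Each successor trace is $\kappa$ extended by the recorded verdict, diagnostic or repaired artifact.
\item For $\mathsf{commit}$ the step is a point distribution with trace $\kappa\,\Commit(c)$.
\item For $\mathsf{repeat}(W)$, unfolding to $W;\mathsf{repeat}(W)$ is a point step that leaves the trace unchanged, i.e.\ the empty extension.
\end{itemize}

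\textbf{Sequential composition (the main obstacle).} The rule for $W_1;W_2$ is inductive: it lifts a step $\langle W_1,\Gamma,\kappa\rangle\rightarrow\nu$ to the distribution obtained by mapping each $\langle W_1',\Gamma',\kappa'\rangle$ to $\langle W_1';W_2,\Gamma',\kappa'\rangle$. I would handle this by structural induction on the derivation of the transition. The induction hypothesis gives that every $\kappa'\in\supp(\nu)$ extends $\kappa$. The support of the pushforward is the image of $\supp(\nu)$, and the map leaves traces untouched, so the property transfers. The step where $W_1$ has terminated and control passes to $W_2$ is a point step with empty extension.

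One point needs care: the claim concerns $\supp(\mu)$, not every configuration mentioned in the rule. If a kernel value has zero mass, its configuration is not in the support. Since the argument only uses that every support element comes from some rule branch, zero-mass terms cause no problem.

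\textbf{Monotonicity along executions.} For a finite execution $\langle W_0,\Gamma_0,\kappa_0\rangle\rightarrow\mu_1\ni\langle W_1,\Gamma_1,\kappa_1\rangle\rightarrow\cdots$, where each configuration lies in the support of the preceding step, I induct on length. The prefix relation is reflexive and transitive, so $\kappa_0\preceq\kappa_1\preceq\cdots$.
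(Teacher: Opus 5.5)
Your proposal is correct and takes the same route as the paper, which justifies the proposition only by the stipulation that every operational rule appends events to the trace; the only rule the paper displays is $\mathsf{generate}$, whose successors have trace $\kappa c$. Your rule-by-rule check, structural induction through sequential composition, and reflexivity and transitivity of the prefix order along executions make that one-line argument explicit, though your cases for $\mathsf{check}$, $\mathsf{repair}$, and $\mathsf{commit}$ rest on the same stipulation because the paper does not write those rules out.
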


Trace monotonicity makes a compatibility claim auditable: the class labels and decisions used to justify replacement can be reconstructed from the run. Traces may be quotiented for privacy, provided the retained abstraction still contains the observations named by the certificate.

Repair is a new model call conditioned on the rejected artifact and diagnostic. Let $A_n$ be the event that attempt $n$ is accepted, conditional on all earlier failures. A useful progress theorem requires a lower bound on conditional success, but not independence.

\begin{theorem}[Conditional repair progress]\label{thm:repair}
Suppose a sound guarded workflow retries after rejection and, for every history with no earlier acceptance,
$\Pr(A_n\mid\neg A_1,\ldots,\neg A_{n-1})\ge p>0$.
Then
\[
\Pr(\text{no acceptance in the first }n\text{ attempts})
\le(1-p)^n.
\]
Thus acceptance occurs almost surely if retries continue indefinitely; every artifact eventually committed is correct by Theorem~\ref{thm:safe}.
\end{theorem}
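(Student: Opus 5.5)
The plan is a chain-rule argument on the failure events, followed by a monotone limit and an appeal to Theorem~\ref{thm:safe}. Write $F_n=\neg A_1\cap\cdots\cap\neg A_n$ for the event that none of the first $n$ attempts is accepted, with $F_0$ the sure event. The runs of the workflow form a probability space: the trace-cylinder measure induced by the PLTS, as in Theorem~\ref{thm:bisim}. On this space, $F_n$ is a finite union of cylinders, since it depends only on the first $n$ checker outcomes recorded in the trace (Proposition~\ref{prop:trace}). So every probability below is well defined.

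\emph{Step 1 (product bound).} I would prove $\Pr(F_n)\le(1-p)^n$ by induction on $n$. The case $n=0$ is trivial. For the inductive step, $F_{n}=F_{n-1}\cap\neg A_n$.
\begin{itemize}
\item If $\Pr(F_{n-1})=0$, then $\Pr(F_n)=0$ and the bound holds.
\item Otherwise, $\Pr(F_n)=\Pr(F_{n-1})\,\Pr(\neg A_n\mid F_{n-1})$. By hypothesis, $\Pr(\neg A_n\mid F_{n-1})\le 1-p$.
\end{itemize}
The induction hypothesis then gives the claim. The hypothesis is stated pointwise: it holds ``for every history with no earlier acceptance''. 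So I would first average it over histories. Conditioning on the full history $h$ of the first $n-1$ rejected attempts gives $\Pr(A_n\mid h)\ge p$ for every such $h$. Summing over the finitely many (or countably many) histories in $F_{n-1}$ yields $\Pr(A_n\mid F_{n-1})\ge p$, by the law of total probability. No independence between attempts is needed, because each factor is a conditional bound given the past.

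\emph{Step 2 (almost-sure acceptance).} The events $F_n$ decrease. Their intersection is the event that no attempt is ever accepted. By continuity from above, its probability is $\lim_n\Pr(F_n)\le\lim_n(1-p)^n=0$, since $p>0$. Hence acceptance occurs almost surely, provided retries continue indefinitely, i.e.\ the policy never halts in a non-accepting state.

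\emph{Step 3 (correctness).} The workflow is guarded and its checker is sound. Theorem~\ref{thm:safe} therefore applies to every reachable committed configuration, regardless of the generator and repair kernels, including the history-dependent repair calls used here. So every artifact eventually committed satisfies its contract.

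The main obstacle is the measure-theoretic bookkeeping rather than the estimate itself. Step 2 needs the infinite-horizon event ``never accepted'' to live in the $\sigma$-algebra generated by the finite cylinders, so that continuity from above applies. I would handle this with the standard Ionescu--Tulcea extension of the PLTS kernels to infinite runs. Step 1 needs conditioning on histories of probability zero to be harmless; I would address this by restricting attention to histories of positive mass.
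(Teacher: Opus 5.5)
Your proposal is correct and follows the paper's own argument: by the chain rule, $\Pr(F_n)$ is the product of the conditional failure probabilities $\Pr(\neg A_k\mid F_{k-1})$, each at most $1-p$, and no independence between attempts is used. Your extra steps fill in details the paper's one-line proof leaves implicit. These are averaging the per-history hypothesis over histories, handling zero-mass conditioning events, using continuity from above to get almost-sure acceptance, and appealing explicitly to Theorem~\ref{thm:safe} for correctness of committed artifacts.
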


\begin{proof}
By the chain rule, the probability of $n$ consecutive failures is the product of their conditional probabilities. Each is at most $1-p$, giving the bound. No independence premise is used.
\end{proof}

This theorem identifies what an empirical repair study should estimate: a history-uniform or history-stratified lower bound, rather than an unconditional average success rate that can hide repeated failure modes.

\section{Approximate Robustness}

Exact class-mass equality is often too strict for sampled models. Define the quotient total-variation distance
\begin{align*}
d_{R_X}(\mu,\nu)&=\TV(\push{q_X}\mu,\push{q_X}\nu),\\
\TV(\alpha,\beta)&=\tfrac12\sum_u|\alpha(u)-\beta(u)|.
\end{align*}

\begin{theorem}[Non-expansiveness]\label{thm:nonexpansive}
If $K$ is $(R_X,R_Y)$-compatible, then
\[
d_{R_Y}(K^{\dagger}\mu,K^{\dagger}\nu)\le d_{R_X}(\mu,\nu).
\]
\end{theorem}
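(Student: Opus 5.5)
The plan is to reduce the statement to the classical fact that a Markov kernel is non-expansive for total variation, applied on the quotient sets. The commuting square of Theorem~\ref{thm:quotient} supplies the reduction. By that theorem, compatibility of $K$ yields a well-defined quotient kernel $\overline K:X/R_X\to\Dist(Y/R_Y)$ satisfying $\push{q_Y}(K^{\dagger}\mu)=\overline K^{\dagger}(\push{q_X}\mu)$, and likewise for $\nu$. Writing $\alpha=\push{q_X}\mu$ and $\beta=\push{q_X}\nu$, the left-hand side of the claim becomes
\[
d_{R_Y}(K^{\dagger}\mu,K^{\dagger}\nu)=\TV\bigl(\overline K^{\dagger}\alpha,\overline K^{\dagger}\beta\bigr),
\]
and the right-hand side is exactly $\TV(\alpha,\beta)$. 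So it suffices to prove the unquotiented statement
\[
\TV(L^{\dagger}\alpha,L^{\dagger}\beta)\le\TV(\alpha,\beta)
\]
for an arbitrary kernel $L:U\to\Dist(V)$ and finitely supported $\alpha,\beta$.

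For this data-processing step I would argue directly from the $\ell^1$ formula. For each $v\in V$,
\[
\bigl|L^{\dagger}\alpha(v)-L^{\dagger}\beta(v)\bigr|
=\Bigl|\sum_u(\alpha(u)-\beta(u))L(u)(v)\Bigr|
\le\sum_u|\alpha(u)-\beta(u)|\,L(u)(v).
\]
Summing over $v$ and using $\sum_v L(u)(v)=1$, after exchanging the finite sums, gives
\[
\sum_v\bigl|L^{\dagger}\alpha(v)-L^{\dagger}\beta(v)\bigr|\le\sum_u|\alpha(u)-\beta(u)|.
\]
Halving both sides yields the inequality. All sums are finite because supports are finite, so the interchange is immediate; in the countable case one would instead invoke Tonelli.

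The only real obstacle is the reduction itself. Without compatibility, $\push{q_Y}K^{\dagger}\mu$ need not factor through $\push{q_X}\mu$, and the inequality fails: two $R_X$-equivalent points with $d_{R_X}=0$ could be sent to different $R_Y$-classes. I would therefore make explicit that the use of Theorem~\ref{thm:quotient} is where the hypothesis enters. Once the statement is transported to the quotient, nothing beyond the standard contraction of TV under stochastic maps is needed.
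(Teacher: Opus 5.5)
Your proof is correct and follows the paper's route: you use the commuting square of Theorem~\ref{thm:quotient} to rewrite both outputs as $\overline K^{\dagger}$ applied to the quotient inputs, and then apply contraction of total variation under a Markov kernel. The only difference is that you prove the $\ell^1$ contraction step explicitly, whereas the paper cites it as a standard fact.
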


\begin{proof}
By Theorem~\ref{thm:quotient}, both outputs are obtained by applying the same quotient kernel $\overline K$ to the quotient inputs. Total variation contracts under every Markov kernel.
\end{proof}

To diagnose imperfect layers, define their compatibility defect
\[
\defect(K)=\sup_{xR_Xx'} d_{R_Y}(K(x),K(x')).
\]
A defect of zero recovers exact compatibility. The next theorem turns interface defects into an end-to-end error budget.

\begin{theorem}[Additive robustness budget]\label{thm:additive}
For any kernel $K:X\to\Dist(Y)$,
\[
d_{R_Y}(K^{\dagger}\mu,K^{\dagger}\nu)
\le d_{R_X}(\mu,\nu)+\defect(K).
\]
For a chain $K_1\bind\cdots\bind K_n$, the final distance is at most the initial quotient distance plus $\sum_i\defect(K_i)$.
\end{theorem}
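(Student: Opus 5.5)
The plan is to interpolate between $K^{\dagger}\mu$ and $K^{\dagger}\nu$ with a hybrid distribution. One leg of the triangle inequality then pays only for the difference in class masses of the inputs, and the other leg pays only for the within-class incompatibility of $K$.

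\textbf{Setup.} Write $\alpha=\push{q_X}\mu$ and $\beta=\push{q_X}\nu$. For each class $E\in X/R_X$ define conditional distributions $\mu_E,\nu_E\in\Dist(E)$ as follows.
\begin{itemize}
\item If $\alpha(E)>0$, let $\mu_E$ be $\mu$ conditioned on $E$; if $\beta(E)>0$, let $\nu_E$ be $\nu$ conditioned on $E$.
\item If only one of $\alpha(E),\beta(E)$ is positive, set the undefined conditional equal to the defined one.
\item If both vanish, pick any $\mu_E=\nu_E$ supported in $E$.
\end{itemize}
By linearity of the affine extension,
\[
K^{\dagger}\mu=\sum_E\alpha(E)\,K^{\dagger}(\mu_E),\qquad K^{\dagger}\nu=\sum_E\beta(E)\,K^{\dagger}(\nu_E).
\]
The hybrid is $\rho=\sum_E\beta(E)\,K^{\dagger}(\mu_E)$.

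\textbf{First leg.} Define a kernel $M:X/R_X\to\Dist(Y/R_Y)$ by $M(E)=\push{q_Y}K^{\dagger}(\mu_E)$. Then
\[
\push{q_Y}K^{\dagger}\mu=M^{\dagger}\alpha,\qquad \push{q_Y}\rho=M^{\dagger}\beta.
\]
The same data-processing fact used in Theorem~\ref{thm:nonexpansive} (total variation contracts under any Markov kernel) gives
\[
d_{R_Y}(K^{\dagger}\mu,\rho)\le\TV(\alpha,\beta)=d_{R_X}(\mu,\nu).
\]
No compatibility of $K$ is needed here, because $M$ is built from a fixed choice of conditionals rather than from representatives.

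\textbf{Second leg.} The hybrid $\rho$ and $K^{\dagger}\nu$ use the same class weights $\beta(E)$, so joint convexity of $\TV$ under mixtures gives
\[
d_{R_Y}(\rho,K^{\dagger}\nu)\le\sum_E\beta(E)\,d_{R_Y}\bigl(K^{\dagger}\mu_E,K^{\dagger}\nu_E\bigr).
\]
To bound each summand, I expand against the product coupling:
\[
K^{\dagger}\mu_E=\sum_{x,x'\in E}\mu_E(x)\nu_E(x')K(x),\qquad K^{\dagger}\nu_E=\sum_{x,x'\in E}\mu_E(x)\nu_E(x')K(x').
\]
Convexity again bounds the summand by $\sum_{x,x'}\mu_E(x)\nu_E(x')\,d_{R_Y}(K(x),K(x'))$. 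Since $x\,R_X\,x'$ for all $x,x'\in E$, each term is at most $\defect(K)$, so the summand is at most $\defect(K)$. The triangle inequality for $d_{R_Y}$, which is a pseudometric as the pullback of $\TV$, then yields the one-step bound.

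\textbf{Chain.} I induct on $n$, using $(K_1\bind\cdots\bind K_n)^{\dagger}=K_n^{\dagger}\circ\cdots\circ K_1^{\dagger}$, which follows from associativity of the Kleisli extension. Applying the one-step bound at each layer, each with its own pair of equivalences, accumulates $\sum_i\defect(K_i)$ on top of the initial quotient distance.

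The step I expect to be the main obstacle is the bookkeeping in the first leg, namely making $\rho$ well defined on classes where $\alpha$ or $\beta$ vanishes. The conditional conventions in the setup are chosen precisely so that $M$ is a single fixed kernel applied to both $\alpha$ and $\beta$.
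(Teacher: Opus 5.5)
Your proof is correct, and it reaches the bound by a different route than the paper's sketch. The paper uses a single overlap coupling. It matches the common class mass $\min(\alpha(E),\beta(E))$ in each class $E$ and couples matched samples inside the class, paying at most $\defect(K)$ on matched mass. The unmatched residual, whose total mass is $d_{R_X}(\mu,\nu)$, is bounded trivially by its size.

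You instead pass through the hybrid $\rho$, which has the class weights of $\nu$ and the within-class conditionals of $\mu$, and split the distance by the triangle inequality. The class-weight leg uses contraction of $\TV$ under the auxiliary quotient kernel $M$. The within-class leg uses joint convexity and the product coupling. That within-class estimate is the same one the paper relies on, since any coupling on $E\times E$ works when all pairs in a class are $R_X$-related.

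Your route is more modular. It reuses exactly the data-processing lemma behind Theorem~\ref{thm:nonexpansive}, and it makes explicit that this step needs no compatibility because $M$ is built from fixed conditionals. It also treats zero-mass classes rigorously, which the paper's sketch leaves implicit.

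The coupling route gives a slightly sharper constant. Carried out carefully, it pays the defect only on the matched mass $1-d$, where $d=d_{R_X}(\mu,\nu)$, and so yields $d+(1-d)\defect(K)$. Your second leg charges $\defect(K)$ against $\beta$'s mass on every class that $\alpha$ also charges, which can exceed $1-d$, so you obtain only the stated $d+\defect(K)$.

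The chain step is the same iteration in both arguments, via $(K\bind L)^{\dagger}=L^{\dagger}\circ K^{\dagger}$.
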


\begin{proof}[Proof sketch]
Match the common mass that $\push{q_X}\mu$ and $\push{q_X}\nu$ assign to each input class, coupling matched samples within that class. Their output quotient distributions differ by at most $\defect(K)$. Unmatched quotient mass is at most $d_{R_X}(\mu,\nu)$ and can contribute at most that amount. Iteration gives the chain bound.
\end{proof}

The budget is actionable: prompt interpretation, retrieval, generation, validation, and repair can be audited separately. A large downstream drift must be explained by an upstream prompt distance or by one or more measured compatibility defects.

\section{Estimating Compatibility from Samples}

For a black-box model, $K(x)$ is estimated by repeated calls under a frozen model version, context, decoding policy, and tool environment. Let the output quotient have $m$ classes and let $\widehat K_N(x)$ be the empirical class distribution from $N$ independent samples.

\begin{theorem}[Finite-sample certificate]\label{thm:sampling}
For a fixed input $x$ and $0<\delta<1$, with probability at least $1-\delta$,
\[
\max_E|\widehat K_N(x)(E)-K(x)(E)|
\le \sqrt{\frac{\log(2m/\delta)}{2N}}.
\]
For $M$ tested input pairs, let $\widehat d_j$ and $d_j$ be their empirical and population quotient total variations. If
\[
N\ge\frac{m^2}{2\tau^2}
\log\frac{4mM}{\delta},
\]
then $|\widehat d_j-d_j|\le\tau$ simultaneously for every pair, with probability at least $1-\delta$.
\end{theorem}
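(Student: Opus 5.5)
Both parts follow from Hoeffding's inequality, a union bound, and the Lipschitz dependence of total variation on coordinate errors.

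\textbf{First claim.} Fix $x$ and a class $E$ of the output quotient. The empirical mass $\widehat K_N(x)(E)$ is the mean of $N$ i.i.d.\ Bernoulli indicators with mean $K(x)(E)$, so Hoeffding gives
\[
\Pr\bigl(|\widehat K_N(x)(E)-K(x)(E)|>t\bigr)\le 2e^{-2Nt^2}.
\]
A union bound over the $m$ classes gives failure probability at most $2me^{-2Nt^2}$. Setting this equal to $\delta$ yields $t=\sqrt{\log(2m/\delta)/(2N)}$, which is the first claim.

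\textbf{Second claim.} For a pair $j=(x_j,x'_j)$ let $\alpha=\push{q_Y}K(x_j)$ and $\beta=\push{q_Y}K(x'_j)$, with empirical versions $\widehat\alpha,\widehat\beta$. By the triangle inequality for the $\ell_1$ norm,
\[
|\widehat d_j-d_j|
\le \TV(\widehat\alpha,\alpha)+\TV(\widehat\beta,\beta)
\le \tfrac{m}{2}\bigl(\|\widehat\alpha-\alpha\|_\infty+\|\widehat\beta-\beta\|_\infty\bigr).
\]
It therefore suffices that every coordinate error is at most $\tau/m$. Apply the first claim to each of the $2M$ inputs $x_j,x'_j$ with confidence parameter $\delta/(2M)$. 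Each then has sup-error at most $\sqrt{\log(4mM/\delta)/(2N)}$, and all hold simultaneously with probability at least $1-\delta$. Requiring this error to be at most $\tau/m$ is exactly $N\ge \frac{m^2}{2\tau^2}\log\frac{4mM}{\delta}$. Inputs shared between pairs only make the union bound more conservative.

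\textbf{Independence assumption.} Hoeffding needs the $N$ samples for a fixed input to be i.i.d.\ draws from $K(x)$. This is the frozen-model, frozen-context, frozen-decoding assumption stated before the theorem, and the proof should say explicitly that it is used. No independence across different inputs is needed, because the union bound does not require it.

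\textbf{Main obstacle.} The step needing most care is the constant bookkeeping. The crude coordinatewise bound loses a factor $m$, and that loss is what produces the $m^2/\tau^2$ in the sample size. A sharper $\ell_1$ concentration bound (Bretagnolle--Huber--Carol) would give a smaller $N$, but the stated constants follow from the simple route above.
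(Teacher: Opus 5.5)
Your proposal is correct and follows the paper's proof essentially step for step. Both apply Hoeffding to each class frequency, union-bound over the $m$ classes, the two endpoints and the $M$ pairs, then note that coordinate error $\tau/m$ gives per-endpoint $\ell_1$ error at most $\tau$, so the (reverse) triangle inequality yields $|\widehat d_j-d_j|\le\tau$; your explicit constant bookkeeping and i.i.d.\ remark simply spell out what the paper leaves implicit.
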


\begin{proof}
Apply Hoeffding to each class frequency and union-bound over $m$ classes, two endpoints, and $M$ pairs. With coordinate error at most $\tau/m$, each endpoint has $\ell_1$ error at most $\tau$; the reverse triangle inequality for $\ell_1$ gives total-variation error at most $\tau$.
\end{proof}

The independence condition concerns repeated sampling, not repair attempts in Theorem~\ref{thm:repair}. If an API changes model versions or uses stateful caching, the audit must stratify runs or use a dependence-aware confidence method. For an observed empirical defect $\widehat\eta$, Theorem~\ref{thm:sampling} supplies an upper confidence value $\eta^{+}$; substituting $\eta^{+}$ into Theorem~\ref{thm:additive} yields a high-probability end-to-end certificate.

A practical audit has five steps. First, declare the observational properties and resulting equivalence classes. Second, freeze or record the model, tools, decoding parameters, and context. Third, sample each proposed replacement pair and classify outputs. Fourth, estimate class distances and confidence bounds. Fifth, compose the upper bounds across the workflow and separately verify the soundness premise of any commit guard. A failed compatibility test is informative: its distinguishing class identifies the semantic behavior changed by the replacement.

\section{A Checkable Certificate Object}

The previous results can be packaged as a finite object that accompanies a proposed prompt, model, tool, or agent replacement. For a pipeline with interfaces $0,\ldots,n$, define a certificate
\[
\mathcal C=(R_0,\ldots,R_n;\epsilon_0;\eta_1^+,\ldots,\eta_n^+;
\delta;\mathcal G),
\]
where $R_i$ declares the observer at interface $i$, $\epsilon_0$ bounds the initial quotient distance, $\eta_i^+$ is a proved or high-confidence upper bound on layer $i$'s compatibility defect, $\delta$ is the joint statistical failure probability, and $\mathcal G$ records the justification for each commit guard. A guard entry may be \emph{sound}, may provide a false-accept upper bound $\beta$, or may be explicitly unsupported. The certificate is independently checkable from class definitions, samples or proofs, and trace data; it need not reveal prompts or generated source if privacy-preserving class evidence is sufficient.

\begin{theorem}[Certificate soundness]\label{thm:certificate}
Suppose every numerical assertion in $\mathcal C$ holds simultaneously with probability at least $1-\delta$. Let $A$ be any final event closed under $R_n$. For the original and replacement workflows,
\[
\bigl|\Pr_W(A)-\Pr_{W'}(A)\bigr|
\le \tau:=\min\{1,\epsilon_0+\textstyle\sum_{i=1}^n\eta_i^+\}
\]
with probability at least $1-\delta$. If a commit guard is sound, its unsafe-commit probability is zero. If instead its false-accept event has probability at most $\beta$, unsafe commit has probability at most $\beta$.
\end{theorem}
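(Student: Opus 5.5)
The plan is to work on the single good event $G$, of probability at least $1-\delta$, on which every numerical assertion in $\mathcal C$ holds simultaneously. On $G$, the initial quotient distance is at most $\epsilon_0$, and each layer $i$ has compatibility defect at most $\eta_i^+$. Everything else in the argument is deterministic. The final inequality then holds whenever $G$ does, which gives the ``with probability at least $1-\delta$'' qualifier directly, with no further union bound.

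For the numerical bound, view the original and replacement workflows as the chain $K_1\bind\cdots\bind K_n$ applied to two input distributions $\mu,\nu$ at interface $0$. The relevant interfaces are $(R_0,\ldots,R_n)$.
\begin{itemize}
\item If the replacement changes a layer rather than the input, model it as a single kernel on a disjoint sum, or on tagged inputs, whose defect covers the pair of layers.
\item Iterating Theorem~\ref{thm:additive} gives $d_{R_n}(\mu_n,\nu_n)\le d_{R_0}(\mu,\nu)+\sum_i\defect(K_i)\le\epsilon_0+\sum_i\eta_i^+$.
\item Since $A$ is $R_n$-closed, $A=q_n^{-1}(q_n(A))$. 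Hence $\Pr_W(A)-\Pr_{W'}(A)=\push{q_n}\mu_n(\bar A)-\push{q_n}\nu_n(\bar A)$, where $\bar A=q_n(A)$.
\item This difference is bounded by $\TV(\push{q_n}\mu_n,\push{q_n}\nu_n)=d_{R_n}(\mu_n,\nu_n)$, by the variational characterization of total variation as a supremum over events.
\item Capping at $1$ is free, since the difference of two probabilities never exceeds $1$. This yields $\tau$.
\end{itemize}

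For the guard clauses, a sound guard is handled by Theorem~\ref{thm:safe}. Every reachable committed pair satisfies $s\models c$, so the unsafe-commit event is empty and has probability zero. For a guard with false-accept bound $\beta$, note that the guard makes an unsafe commit require an accept verdict on a pair with $s\not\models c$. The unsafe-commit event is therefore contained in the false-accept event, and monotonicity of probability gives the bound $\beta$. This step depends on interpreting ``false-accept event'' as the event that some accepted pair on the run fails its contract, over the whole run including repairs. With a per-attempt interpretation one would need an extra union bound, so I would fix the run-level interpretation explicitly.

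I expect the main obstacle to be making the chain bound apply to a replacement workflow rather than just a change of prompt. Theorem~\ref{thm:additive} compares one kernel on two inputs, whereas a replacement may swap a kernel $K_i$ for $K_i'$. My first attempt would be to place both kernels on the input type $X_i\times\{0,1\}$, with a relation that identifies $(x,b)R(x',b')$ iff $xR_ix'$. Then $\eta_i^+$ must bound both the within-kernel defect and the cross defect $\sup_{x R_i x'}d(K_i(x),K_i'(x'))$. I would state this as the reading of ``layer $i$'s compatibility defect'' in the certificate. After that, the argument reduces to the earlier theorems.
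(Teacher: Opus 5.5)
Your proof is correct and takes the same route as the paper's. On the simultaneous-assertion event you iterate Theorem~\ref{thm:additive} to bound the final quotient distance by $\epsilon_0+\sum_i\eta_i^+$, use that total variation dominates the probability difference of every $R_n$-closed event, and get the guard clauses from Theorem~\ref{thm:safe} and the inclusion of unsafe commit in the false-accept event. Two of your readings make explicit what the paper's one-line proof leaves implicit: the tagged-kernel reading of $\eta_i^+$ for swapped layers, which also bounds the cross defect $\sup d(K_i(x),K_i'(x'))$ over related inputs $x,x'$, and the run-level reading of the false-accept event. The first is genuinely needed, since Theorem~\ref{thm:additive} only compares a single kernel on two inputs, and without it two individually compatible but different kernels would defeat the bound.
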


\begin{proof}
The event bound follows from Theorem~\ref{thm:additive}, because total variation upper-bounds the probability difference of every measurable quotient event. The confidence statement is inherited from the simultaneous bounds. The guard clauses follow from Theorem~\ref{thm:safe}, or by inclusion of unsafe commit in the false-accept event.
\end{proof}

The theorem clarifies three different outcomes. A \emph{robust and safe} certificate has small $\tau$ and a sound guard. A \emph{robust but uncertified} replacement has small $\tau$ but no checker-soundness argument: it behaves like the baseline but may reproduce its errors. A \emph{correct but non-robust} system may commit only verified programs while paraphrases yield very different rejection, repair, latency, or cost traces. These dimensions should be reported separately.

Certificate composition is modular. Two adjacent certificates can be concatenated when they use the same intermediate equivalence; their defect bounds add and their statistical failure probabilities combine by a union bound. Observer refinement is also checkable: Proposition~\ref{prop:observer} permits projection from a stronger certificate to a weaker observation boundary, but never the reverse. Consequently, a repository can maintain certificates at different assurance levels---for example, functional, frame-safe, resource-aware, and information-flow-aware---without treating them as interchangeable.

\section{Sharpness and Failure Modes}

The hypotheses above are not merely sufficient proof conveniences. Small finite counterexamples show why an audit must expose each of them.

\paragraph{Compatibility is necessary.}
Let $xR_Xx'$ and let $a,b$ be distinct $R_Y$-classes. If $K(x)=\delta_a$ and $K(x')=\delta_b$, then $\overline K(\class{x})$ would have to be both $\delta_{\class a}$ and $\delta_{\class b}$. No quotient kernel exists. In a code agent, this is exactly a semantics-preserving paraphrase that systematically changes the generated behavior class; composition cannot repair the missing local congruence.

\paragraph{Events must be quotient-closed.}
Suppose $aR_Yb$ and two pipelines output $\delta_a$ and $\delta_b$. They are equivalent at the declared interface, yet the concrete event $\{a\}$ has probabilities one and zero. Corollary~\ref{cor:replacement} therefore cannot promise equality for observations that the equivalence intentionally erased. This is why source-string equality, exact diagnostic wording, or a particular proof term cannot be queried from a certificate that preserves only functional behavior.

\paragraph{Extensionality and soundness differ.}
A checker that accepts every program is perfectly extensional under any program equivalence, but it is not sound. Conversely, a sound checker could expose irrelevant syntactic diagnostics and thereby fail extensionality. The first breaks safe commitment; the second breaks replacement while preserving correctness. Proposition~\ref{prop:validator} requires both properties only when both conclusions are desired.

\paragraph{Hidden state must enter the configuration.}
If a model service changes behavior after previous calls, a purported map from prompt to output distribution is not a kernel on prompts alone. Adding model version, decoding state, retrieved context, tool state, and relevant trace history to $\Gamma$ restores the Markov property. Omitting them can create spurious compatibility: two prompts agree in one call order but diverge after different histories.

\paragraph{The quantitative constants are tight.}
Non-expansiveness is tight for an identity quotient kernel, and a layer defect can contribute its full value when representatives attain $\defect(K)$. The additive budget is therefore a worst-case guarantee; tighter bounds require additional structure.

\section{Executable Semantic Audit}

We instantiate the definitions on insertion sort. The purpose is to exercise every semantic object exactly, not to estimate a production model's quality. The audit implementation uses only the Python standard library and emits all reported numbers.

\paragraph{Observers and programs.}
Six concrete functions implement: built-in sorted copy, insertion-sort copy, built-in in-place sort, descending sort, duplicate-removing sort, and an off-by-one insertion sort. We execute each on all 121 arrays of length $0$--$4$ over $\{-1,0,1\}$. A weak observer records return values; a strong observer also records the post-state of the input. Under the weak observer, the two pure implementations and the in-place implementation are equivalent. Under the strong observer, the in-place implementation is separated, while the two pure implementations remain equivalent. This provides a concrete distinguishing context rather than an asserted semantic difference.

\paragraph{Prompts and kernels.}
The prompt texts are: $p_A$, ``return a sorted copy and leave the input unchanged''; $p_B$, ``produce a nondecreasing permutation without mutating the argument''; and $p_F$, a formal pre/postcondition specifying nondecreasing order, multiset equality, and frame preservation. They induce different distributions over two syntactically distinct but equivalent copy-sort contracts. Their quotient contract distribution is the point mass on \emph{copy-sort}. The two contract-specific generation rows differ at artifact level but assign the same masses to five program-property classes: pure-correct $0.60$, mutating-correct $0.15$, descending $0.10$, lossy $0.10$, and partial-sort $0.05$. A sound copy-sort checker commits only pure-correct programs.

\begin{table}[t]
\centering
\footnotesize
\setlength{\tabcolsep}{3pt}
\begin{tabular}{lccc}
\toprule
Comparison & Contract TV & Code TV & Verdict TV \\
\midrule
$p_A$ vs. $p_B$ & 0 & $<1.5\cdot10^{-16}$ & $<6\cdot10^{-17}$ \\
$p_A$ vs. $p_F$ & 0 & $<8\cdot10^{-17}$ & $<3\cdot10^{-17}$ \\
$p_A$ vs. $p_{\mathrm{near}}$ & .0500 & .0275 & .0050 \\
\bottomrule
\end{tabular}
\caption{Quotient total-variation distances. Tiny nonzero values are floating-point roundoff.}
\label{tab:distances}
\end{table}

Table~\ref{tab:classes} makes the role of the observer explicit. The checker is stricter than the weak return-value observer because the copy contract contains a frame condition.

\begin{table}[t]
\centering
\footnotesize
\setlength{\tabcolsep}{3pt}
\begin{tabular}{lccc}
\toprule
Implementation & Weak class & Strong class & Copy commit \\
\midrule
Built-in copy & $E_1$ & $F_1$ & yes \\
Insertion copy & $E_1$ & $F_1$ & yes \\
In-place sort & $E_1$ & $F_2$ & no \\
Descending & $E_2$ & $F_3$ & no \\
Deduplicate & $E_3$ & $F_4$ & no \\
Off-by-one & $E_4$ & $F_5$ & no \\
\bottomrule
\end{tabular}
\caption{Exhaustive equivalence classes for six concrete programs.}
\label{tab:classes}
\end{table}

The three equivalent prompts all yield commit probability $0.60$ and identical rejection-class distributions, as predicted by Corollary~\ref{cor:replacement}. A near prompt places $0.05$ mass on an in-place contract. The strong observer detects the new mutation behavior. Nevertheless, code-class distance is $0.0275$ and verdict distance is $0.0050$, both below the input distance $0.05$ as required by Theorem~\ref{thm:nonexpansive}. A fully in-place prompt yields mutation-class mass $0.70$ and is not declared equivalent.

\paragraph{Randomized law checks.}
We additionally sample 10,000 pairs of distributions and random finite kernels of dimensions two through eight. No trial violates total-variation contraction; the largest observed output/input ratio is $0.9471$ (mean $0.3435$). A second 10,000-trial test checks
$\TV(\mu K,\nu L)\le\TV(\mu,\nu)+\sup_x\TV(K_x,L_x)$; again there are no violations. These tests do not replace proofs. They validate the implementation, numerical conventions, and artifact tables, and make the reported case study independently reproducible.

\section{Observer Design and Refinement}

A robustness claim is meaningful only relative to an observation boundary. Let an observer be a map $h:C\to B$ from concrete executions to observable records. It induces $cR_hc'$ exactly when $h(c)=h(c')$. Return values, mutated state, exceptions, proof certificates, resource use, information-flow events, and tool calls can be included independently. This makes the abstraction explicit and reviewable rather than hidden inside a benchmark metric.

\begin{definition}[Observer refinement]
Observer $h_2:C\to B_2$ \emph{refines} $h_1:C\to B_1$ when there exists a map $r:B_2\to B_1$ such that $h_1=r\circ h_2$. Thus $h_2$ retains at least the information exposed by $h_1$.
\end{definition}

\begin{proposition}[Monotonicity under observer refinement]\label{prop:observer}
If $h_2$ refines $h_1$, then $R_{h_2}\subseteq R_{h_1}$. Consequently, exact equivalence under $h_2$ implies equivalence under $h_1$, but not conversely. For distributions,
\[
 d_{R_{h_1}}(\mu,\nu)\le d_{R_{h_2}}(\mu,\nu).
\]
\end{proposition}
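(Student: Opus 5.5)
The plan is to prove the three claims in order, each from the previous one.

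\emph{Inclusion of relations.} Suppose $c R_{h_2} c'$, so $h_2(c)=h_2(c')$. Applying the map $r$ from the refinement definition gives
\[
h_1(c)=r(h_2(c))=r(h_2(c'))=h_1(c'),
\]
so $c R_{h_1} c'$. Hence $R_{h_2}\subseteq R_{h_1}$, and every $R_{h_1}$-class is a union of $R_{h_2}$-classes.

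\emph{Exact equivalence.} For distributions, suppose $\mu\,\liftrel{R_{h_2}}\,\nu$. Any $R_{h_1}$-class $E$ is a disjoint union of $R_{h_2}$-classes, and $\mu,\nu$ agree on each of these, so by finite additivity $\mu(E)=\nu(E)$. The same argument applies pointwise to elements: $c R_{h_2} c'$ implies $c R_{h_1} c'$.

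\emph{Failure of the converse.} I will exhibit the weak and strong observers from the insertion-sort audit. The built-in copy and the in-place sort share weak class $E_1$ but lie in distinct strong classes $F_1$ and $F_2$. Choosing $\mu=\delta_{\text{copy}}$ and $\nu=\delta_{\text{in-place}}$ gives distributions that are $h_1$-equivalent but not $h_2$-equivalent.

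\emph{Distance inequality.} Here I use a coarsening map rather than reproving total-variation contraction. The inclusion $R_{h_2}\subseteq R_{h_1}$ yields a well-defined map
\[
g:C/R_{h_2}\to C/R_{h_1},\qquad g([c]_2)=[c]_1,
\]
with $q_{h_1}=g\circ q_{h_2}$. Therefore $\push{q_{h_1}}\mu=\push{g}\push{q_{h_2}}\mu$, and likewise for $\nu$. It remains to show that total variation does not increase under the deterministic pushforward $\push{g}$. Viewing $g$ as the point-mass kernel $u\mapsto\delta_{g(u)}$, this is the Markov-kernel contraction already invoked in the proof of Theorem~\ref{thm:nonexpansive}.

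To keep the argument self-contained, I would also give the direct calculation. For each $v$, the triangle inequality gives
\[
\Bigl|\sum_{u\in g^{-1}(v)}\bigl(\alpha(u)-\beta(u)\bigr)\Bigr|\le\sum_{u\in g^{-1}(v)}|\alpha(u)-\beta(u)|.
\]
Summing over $v$ and halving yields $\TV(\push{g}\alpha,\push{g}\beta)\le\TV(\alpha,\beta)$.

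The only real obstacle is the well-definedness of $g$, and this is precisely what the inclusion $R_{h_2}\subseteq R_{h_1}$ provides. The rest is routine.
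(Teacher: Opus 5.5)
Your proposal is correct and follows the paper's own proof. You derive $R_{h_2}\subseteq R_{h_1}$ by applying $r$, and you get the distance inequality by factoring $q_{h_1}$ through the coarsening map on quotients, using the fact that total variation cannot increase under a deterministic pushforward. You are more explicit than the paper in three places: you check that the coarsening map is well defined, you compute the contraction directly, and you witness the failure of the converse with the copy versus in-place sorts, which the paper only mentions in the discussion after its proof.
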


\begin{proof}
If $h_2(c)=h_2(c')$, applying $r$ gives $h_1(c)=h_1(c')$. The quotient for $h_1$ is therefore a further coarsening of the quotient for $h_2$, and total variation cannot increase under the induced deterministic map.
\end{proof}

This proposition gives a simple discipline for safety-critical audits: begin with the deployment observations and refine them until every property that can invalidate substitution is represented. A unit-test observer may equate two programs that agree on sampled inputs; adding exhaustive bounded executions, frame conditions, exceptions, or a sound verifier can split that class. A certificate produced under a coarse observer remains valid only for the coarse question. The insertion-sort audit below demonstrates the issue: return-only observation equates copying and mutating sorts, while the post-state observer separates them.

\section{Nondeterminism and Agent Control}

Real agents contain both probabilistic model calls and nondeterministic choices: a scheduler selects an agent, an orchestrator selects a tool, or a human chooses whether to retry. A fixed orchestration policy resolves these choices and induces the kernels used above. With unresolved choices, the semantics is a probabilistic automaton or Markov decision process rather than a Markov chain \cite{segala1995probabilistic,baier2008principles}.

There are two useful certificate strengths. A \emph{policy-specific} certificate proves compatibility for the deployed scheduler and tool-selection policy. A \emph{policy-robust} certificate requires every enabled action to be matchable by an action with equivalent successor-class mass; this is the standard alternating form of probabilistic bisimulation. The latter implies replacement under every scheduler that respects the action interface. Event probabilities can then be bounded by taking the supremum or infimum over schedulers. This distinction matters for agentic code generation: a compatible generator does not compensate for an orchestrator that routes semantically equivalent diagnostics to different repair tools.

For bounded repair horizons, Theorem~\ref{thm:additive} gives a conservative drift bound. If the initial prompt distance is $\epsilon$, and each of at most $T$ generation, validation, or repair actions has defect at most $\eta_t$, every quotient-trace event differs by at most $\epsilon+\sum_{t=1}^{T}\eta_t$. Under exact action compatibility, Theorem~\ref{thm:bisim} removes the dependence on $T$ entirely: loops and interleavings preserve equivalence because the relation is an invariant of every transition, not a one-pass approximation.

\section{Protocol for a Model-Based Evaluation}

The finite audit establishes that the definitions, theorems, and implementation agree. A full model evaluation should test whether contemporary code models satisfy the local obligations. Table~\ref{tab:protocol} defines an executable protocol that avoids conflating textual paraphrase invariance, functional correctness, and safe commitment.

\begin{table}[t]
\centering
\footnotesize
\begin{tabular}{L{1.20cm}L{1.75cm}L{2.60cm}}
\toprule
Stage & Construction & Reported quantity \\
\midrule
Tasks & HumanEval, MBPP, formal tasks & Task, language, contract, observer \\
Prompts & Paraphrases, formalizations, frame variants & Contract-class input distance \\
Samples & $N$ frozen calls per prompt and model & Empirical code-class distributions \\
Checking & EvalPlus tests and sound tools where available & Verdict classes and commit events \\
Analysis & Pairwise and joint bounds & Defects, distinguishing classes, budget \\
Repair & Fixed diagnostic templates and retry cap & Conditional success by failure history \\
\bottomrule
\end{tabular}
\caption{Protocol for testing a model-mediated \GenOS{} pipeline.}
\label{tab:protocol}
\end{table}

\paragraph{Prompt families.}
Each task should contain: (i) meaning-preserving lexical and syntactic paraphrases; (ii) natural-language and formal or semi-formal versions of the same contract; and (iii) deliberately non-equivalent near misses, such as changing an in-place requirement, exception policy, numerical precondition, or complexity constraint. The third group is essential: a metric that declares both paraphrases and near misses equivalent has selected an inadequate observer.

\paragraph{Artifact classification.}
Programs are first compiled and executed in a sandbox. Functional observations can combine reference tests with augmented suites such as EvalPlus \cite{liu2023evalplus}; frame, exception, and resource observations are recorded separately. When a formal specification and a sound tool are available, proof acceptance forms a distinct, stronger commit guard rather than being merged with test success. Syntactically different outputs are aggregated only after their observable behavior has been classified.

\paragraph{Statistics and diagnosis.}
Each pair is reported with quotient total variation, a simultaneous bound from Theorem~\ref{thm:sampling}, and its distinguishing classes. Measuring these quantities at every interface localizes drift to interpretation, generation, validation, repair, or orchestration. Repair results are stratified by failure history, because an unconditional average does not establish Theorem~\ref{thm:repair}'s premise.

\paragraph{Decision rule.}
A replacement is certified for event family $\mathcal A$ only when the composed bound meets its tolerance and every event is quotient-closed. Safe commitment separately requires checker soundness; robustness and correctness are distinct claims.

\section{Implications and Limitations}

A deployable certificate records the observers, model and tool identities, sample counts, confidence level, per-layer defects, composed bound, and commit-guard evidence. It can cover functional behavior, mutation, exceptions, cost, or security, but only observations represented by its equivalences. The PLTS result supports local replacement inside generate--verify--repair loops and communicating agents; guarded safety remains an orthogonal obligation.

The guarantees are conditional on well-chosen observers and proved or measured compatibility. Exact model kernels are unobservable, so empirical audits depend on frozen environments, confidence procedures, and adequate sampling. Nontermination and continuous artifact spaces require measurable quotients or subdistributions beyond the finite presentation developed here. Finally, the controlled audit validates the semantics rather than benchmarking an LLM. A model study should execute the stated protocol with paraphrase families, formalizations, repeated generations, and strong behavioral or proof-based observers; FormalBench and EvalPlus are natural starting points \cite{lecong2025formalbench,liu2023evalplus}.

\section{Conclusion}

\GenOS{} treats an AI coding workflow as a probabilistic program with auditable semantic interfaces. Compatible kernels commute with quotienting and composition; approximate defects form an explicit robustness budget; sound guards establish commit safety. The framework does not assume model invariance. It turns invariance into a local, testable obligation and composes the resulting evidence end to end.

\bibliographystyle{plainnat}
\bibliography{genos_arxiv}

\end{document}